\newtheorem{theorem}{Theorem}
\newtheorem{lemma}[theorem]{Lemma}
\newcommand{\squishlist}{
	\begin{list}{$\bullet$}
		{ \setlength{\itemsep}{2.5pt}      \setlength{\parsep}{3pt}
		  \setlength{\topsep}{3pt}       \setlength{\partopsep}{0pt}
		  \setlength{\leftmargin}{1.0em} \setlength{\labelwidth}{1em}
		  \setlength{\labelsep}{0.5em} } }
\newcommand{\squishlisttwo}{
	\begin{list}{$\bullet$}
		{ \setlength{\itemsep}{2.25pt}    \setlength{\parsep}{0pt}
		  \setlength{\topsep}{5pt}     \setlength{\partopsep}{0pt}
		  \setlength{\leftmargin}{2em} \setlength{\labelwidth}{1em}
		  \setlength{\labelsep}{0.5em} } }
\newcommand{\squishend}{
	\end{list}  }
\def\blfootnote{\xdef\@thefnmark{}\@footnotetext}
\begin{document}

%
\title{\textit{MLSNet:} A Policy Complying Multilevel Security Framework for Software Defined Networking}

%


\author{Stefan Achleitner, Quinn~Burke,~\IEEEmembership{Student Member,~IEEE,}\\ Patrick~McDaniel,~\IEEEmembership{Fellow,~IEEE,} Trent~Jaeger,~\IEEEmembership{Member,~IEEE,}\\ Thomas~La~Porta,~\IEEEmembership{Fellow,~IEEE,} and Srikanth~Krishnamurthy,~\IEEEmembership{Fellow,~IEEE}\\
}



%
%
\maketitle
\blfootnote{
This research was sponsored by the U.S. Army Combat Capabilities Development Command Army Research Laboratory and was accomplished under Cooperative Agreement Number W911NF-13-2-0045 (ARL Cyber Security CRA). The views and conclusions contained in this document are those of the authors and should not be interpreted as representing the official policies, either expressed or implied, of the Combat Capabilities Development Command Army Research Laboratory or the U.S. Government. The U.S. Government is authorized to reproduce and distribute reprints for Government purposes notwithstanding any copyright notation here on.

Stefan Achleitner was with the Department of Computer Science and Engineering, The Pennsylvania State University, University Park, PA 16802 USA. He is now with Palo Alto Networks, Inc., Santa Clara, CA 95054 USA (e-mail: stefan@stefanachleitner.com).

Quinn Burke, Patrick McDaniel, Trent Jaeger, and Thomas La Porta are with the Department of Computer Science and Engineering, The Pennsylvania State University, University Park, PA 16802 USA (e-mail: qkb5007@psu.edu; mcdaniel@cse.psu.edu; trj1@psu.edu; tfl12@psu.edu).

Srikanth Krishnamurthy is with the Department of Computer Science and Engineering, University of California, Riverside, Riverside, CA 92521 USA (e-mail: krish@cs.ucr.edu).
}

\begin{abstract}
Ensuring that information flowing through a network is secure from manipulation and eavesdropping by unauthorized parties is an important task for network administrators. Many cyber attacks rely on a lack of network-level information flow controls to successfully compromise a victim network. Once an adversary exploits an initial entry point, they can eavesdrop and move laterally within the network (e.g., scan and penetrate internal nodes) to further their malicious goals. In this paper, we propose a novel multilevel security (MLS) framework to enforce a secure inter-node information flow policy within the network and therein vastly reduce the attack surface available to an adversary who has penetrated it.
In contrast to prior work on multilevel security in computer networks which relied on enforcing the policy at network endpoints, we leverage the centralization of software-defined networks (SDNs) by moving the task to the controller and providing this service transparently to all nodes in the network.
Our framework, \textit{MLSNet}, formalizes the generation of a policy compliant network configuration (i.e., set of flow rules on the SDN switches) as network optimization problems, with the objectives of (1) maximizing the number of flows satisfying all security constraints and (2) minimizing the security cost of routing any remaining flows to guarantee availability. We demonstrate that MLSNet can securely route flows that satisfy the security constraints (e.g., $>80\%$ of flows in a performed benchmark) and route the remaining flows with a minimal security cost.
\end{abstract}

\begin{IEEEkeywords}
Software-defined networks, security services, security management.
\end{IEEEkeywords}

\section{Introduction}
\label{sec:intro}
Ensuring that information flowing through a network is secure from manipulation and eavesdropping by unauthorized parties is an important task for network administrators. Many attacks against modern networks rely on a lack of network-level information flow controls to infiltrate an organizational network. Here, adversaries initially subvert edge defenses to target and compromise an internal node. Once inside the network, the adversary can probe network nodes or eavesdrop on flows to penetrate further into the network~\cite{jang2014survey}. This adversary-enabling freedom of movement and lack of secure routing (to prevent eavesdropping) can be cast as a classical {\it information flow} problem in security~\cite{denning1976lattice}.

Even with defenses such as firewalls, information flow control in networks often fails: configuration is error-prone~\cite{yuan2006fireman}, and compromised internal hosts may initiate flows that never have to cross a firewall boundary~\cite{spitzner2003honeypots}. Thus, adversaries can exploit firewall rule conflicts to exfiltrate information, and internal adversaries can eavesdrop and move laterally (i.e., scan and penetrate internal nodes) within their network boundary without restriction. Fundamentally, they are enabled by a lack of security policy governing what flows are permitted and what paths they may take in the network.

Multilevel security (MLS) provides the means to enforce such a policy. A multilevel security framework controls information flow among entities of different security classes with security labels (i.e., levels and categories) assigned to those entities. In fact, multilevel security already plays a critical role in controlling access to information for both military personnel and employees of commercial businesses with different levels of clearance~\cite{saydjari2004multilevel}. Common use cases include controlling file access in an operating system~\cite{loscocco2001security} or table access in a relational database~\cite{qian1997semantic}. 
The notion of multilevel security can also be applied to computer networks, where the MLS policy dictates which nodes are allowed to communicate, what type of traffic they may exchange, and what paths the flows may take in the network. This strategy can prevent eavesdropping and unrestricted lateral movement that plague modern networks.

\textit{Lu et al.}~\cite{lu1990model} envisioned such a model that enforces the information flow policy at network endpoints; however, the scale and dynamic behavior of modern networks make deploying such an enforcement mechanism on every endpoint impractical. Despite this, the inherent centralization of software-defined networks (SDNs) allows enforcement of a network-level MLS policy in a scalable and efficient manner. Determination of whether or not flows are permitted can be done by the controller, and the policy can be enforced by the data-plane switches in the form of flow rules---which allows the service to be provided transparently to the entire network.

Thus, in this paper, we propose an SDN-based MLS framework to enforce an inter-node information flow policy that preserves confidentiality. The challenge here is to fit the organizational needs by allowing entities to exchange permitted flows while also configuring the network (by leveraging flow rules) to be policy compliant. Permitted flows between two endpoints may not always find a secure path due to limited network resources (e.g., link capacity). Then, to guarantee availability, a flow may have to be routed through an insecure path. We refer to such a situation as a \textit{policy conflict}, and each conflict imposes a \textit{security cost} in terms of the risk the flow is being exposed to.

Unlike prior work~\cite{lu1990model}, we approach the challenge of securing information flow in the network by considering two optimization models: one that can provide a secure network configuration (i.e., composition of flows rules) that obeys the security policy, supplemented by a model that can minimally relax the security policy to ensure that every flow can be routed. The key contributions are:

\squishlisttwo
{\color{black}
\item{An optimization model to maximize the number of flows routed according to the given security policy in an SDN.}
\item{An optimization model to minimize the security cost of routing any remaining flows to guarantee availability.}
\item{A method for constructing flow rules which adhere to a given security policy.}
\item{{A comprehensive evaluation of MLSNet's ability to generate policy compliant network configurations and resolve policy conflicts in realistic network topologies.}}
}
\squishend

\begin{table}[t]
	{{
		\centering
		\caption{Nomenclature and notation.}
		\label{tab:notation}
		\begin{tabular}{|l|l|}
			\hline
			\bfseries Notation & \bfseries Description \\
			\hline
			$V$ & Set of vertices in network graph $G=\{V,E\}$\\
			\hline
			$E$ & Set of edges in network graph $G=\{V,E\}$\\
			\hline
			$F$ & Set of packet flows to be accommodated\\
			\hline
			$R$ & Set of matching fields in a flow rule\\
			\hline
			$A$ & Set of action fields in a flow rule\\
			\hline
			$S$ & Set of subjects\\
			\hline
			$O$ & Set of objects\\
			\hline
			$C$ & Set of security categories\\
			\hline $d^f$ & Size of flow $f \in F$\\
			\hline $\kappa_{ij}$ ($\widetilde{\kappa}_{ij}$) & Residual capacity of link $(i,j)$, $(i,j) \in E$\\
			\hline $\sigma_{i}$ & Security level of node $i$, $i \in V$\\
			\hline $\lambda^c_{i}$ & Security category $c$ at node $i$\\
			\hline $L$ & Set of labels that form the lattice\\
			\hline
		\end{tabular}
	}}
\end{table}
\section{Definitions and Background}
\label{sec:background}
In this section, we extend prior work's~\cite{lu1990model} terms and notations (Table~\ref{tab:notation}) to an SDN setting and provide background on related network security threats and defenses, and MLS.

\subsection{Term Definitions}
\label{sec:definitions}
{\color{black}
\noindent \textbf{Node.} A resource connected to a network (e.g., a user, server, router, or SDN switch).\newline\noindent
\textbf{Subject.} A node that initiates communication to other nodes in the network.
\newline\noindent
\textbf{Object.} A node that either provides (\textit{provider}) and/or receives (\textit{receiver}) information to/from subjects\newline\noindent
\textbf{Forwarding Node.} A node (SDN switch) that processes incoming flows according to the installed flow rules.\newline\noindent
\textbf{Controller.} An application in the SDN control plane that has a global view of the topology and installs flow rules to forwarding nodes based on the security policy.\newline\noindent
\textbf{Security Levels.} Hierarchical attributes (e.g., top-secret, public) that indicate relative authorization power.\newline\noindent
\textbf{Security Categories.} Non-hierarchical attributes (e.g., TCP, IP) that offer finer-grained authorization besides the security level. In MLSNet, security categories are only assigned to objects and subjects but not forwarding nodes. \newline\noindent
 \textbf{Security Label.} The security level and categories combined, used by the controller to admit or deny flows.

{\color{black}
\subsection{Network Threats to Confidentiality}
\label{sec:securitydiscussion}
{\color{black}
Confidentiality ensures that information is only being accessed by authorized parties. In the context of networking, preserving confidentiality means that only explicitly allowed communication can flow between any two nodes in the network to prevent data from falling into the hands of untrusted entities. A lack of a formal communication policy to realize this may allow adversaries who have compromised internal nodes to explore the network or eavesdrop on flows. For example, in Figure~\ref{fig:mls_scenario_1}, a compromised trusted node in the software center may be able to probe other nodes in the software and commercial centers as they are all behind the same firewall boundary; the firewall itself cannot prevent the adversary from probing all nodes on TCP port 22. Indeed, this is possible regardless of the (implicit) security level of the nodes; however, we can reduce attacker capabilities (enforce least-privilege) with multilevel security.

\begin{figure}[t]
	\centering
	\includegraphics[width=0.42\textwidth]{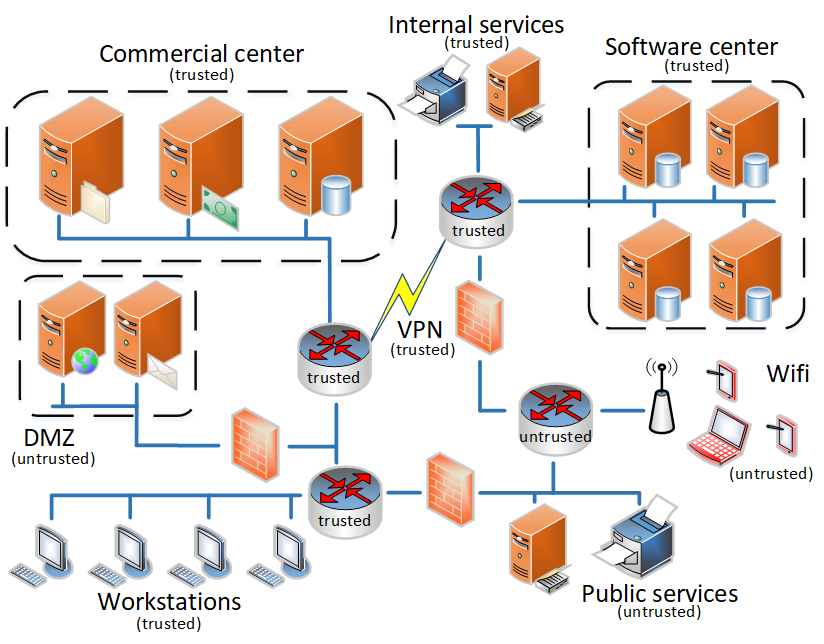}
	\caption{Corporate network scenario.}
	\label{fig:mls_scenario_1}
\end{figure}

Recent work has demonstrated the ability of an adversary to freely probe within their network boundary to recover sensitive information about the network~\cite{hong2015poisoning,yoon2017flow,dhawan2015sphinx,porras2012security,hu2014flowguard}, including active network hosts and even switch flow table rules~\cite{achleitner2017adversarial}.
We observe that although discovered attacks on networks in the literature pursue different goals, the strategies of those posing a threat to confidentiality can be reduced to a small set of techniques.

\textbf{Packet Spoofing.} Spoofing packets is the most common technique. By spoofing, adversaries may be able to impersonate other nodes to escalate privilege~\cite{dawoud2010infrastructure} or leak information to untrustworthy nodes or outside of the network~\cite{tankard2011advanced}.

\textbf{Lateral Movement.} Adversaries can also move laterally by probing many other nodes in the network. This nonessential communication may allow them to extract sensitive information from nodes of higher security levels or compromise nodes and escalate privilege to move deeper into the network~\cite{achleitner2017adversarial}.

\textbf{Man-in-the-Middle.} Adversaries can also position themselves as a man-in-the-middle (MiTM), silently eavesdropping on communications traversing them or within their broadcast domain~\cite{antikainen2014spook,khan2017topology}.
}

\subsection{Proposed Defenses}
\label{sec:proposeddefenses}
{\color{black}Although there have been defenses proposed against some of the discussed attack techniques, they are limited in their ability to preserve confidentiality.

\textbf{Source validation.} To address the issue of packet spoofing, source validation seeks to only permit packets into a network which's source IP is valid on the given network interface. This is typically implemented via ingress filtering~\cite{senie1998network} in wide-area networks; however, it is difficult to implement, especially in data-center networks~\cite{savu2011cloud}, and does not prevent adversaries from spoofing nodes within their own subnetwork.

\textbf{Firewalling.}
The primary purpose of a firewall is to prevent unauthorized packets from entering a network or subnetwork. However, firewalling is limited with respect to preventing lateral movement as configuration is error-prone~\cite{yuan2006fireman}, and compromised internal hosts can still probe within their network boundary~\cite{spitzner2003honeypots} to compromise internal nodes.

\textbf{Encryption.}
Active man-in-the-middle attacks (i.e., those staged by spoofing) may be mitigated with source validation; however, preventing passive MiTM (eavesdroppers) is difficult. Even with services such as encryption, adversaries can still perform traffic analysis to extract sensitive information~\cite{feghhi2016web}.
}

\begin{table}[t]
	\centering
	\caption{Defenses against discussed attack techniques.}
	\small
	\label{tab:sdn_defense}
	\begin{tabular}{| c | c | m{0.5cm} | m{0.5cm} | m{0.5cm} |}
		\cline{2-5}
		\multicolumn{1}{c|}{} & \textbf{MLSNet} & \textbf{\cite{senie1998network}} & \textbf{\cite{spitzner2003honeypots}} & \textbf{\cite{feghhi2016web}} \\ \hline
		\textbf{Packet spoofing} & $\times$ & $\times$ & - & - \\ \hline
		\textbf{Lateral movement} & $\times$ & - & $\times$ & - \\ \hline
		\textbf{Man-in-the-middle} & $\times$ & - & - & $\times$ \\ \hline
	\end{tabular}
\end{table}
{\color{black}
\subsection{Preserving Confidentiality with Multilevel Security}
Broadly speaking, existing defenses solve distinct problems and only partially address the issue of confidentiality. Adversaries are enabled by a lack of policy preventing them from probing network nodes and eavesdropping on communications. What is needed are means to specify what flows are permitted and what paths they may take in the network.
\newline \indent
\textbf{MLS.} A multi-level security policy provides the means to prevent these problems with a secure flow model between entities that are assigned specific security labels (i.e., a level and categories). The security labels form a lattice structure, which reflects a hierarchical ordering of their relative authorization power. We consider a node's label to be higher than another node's if the former's level is greater than or equal to, and the categories form a superset of, the latter's. With respect to confidentiality, information should only flow to nodes with the same or higher security label to prevent the potential leakage of sensitive data to nodes of lower security labels. This is typically summarized as \textit{"no read up, no write down"}.
\newline \indent
\textbf{Network MLS.} Multilevel security already plays a critical role in controlling access to files and databases in military and commercial business contexts~\cite{saydjari2004multilevel,loscocco2001security,qian1997semantic}. This notion can also be applied to computer networks to prevent the eavesdropping and unrestricted lateral movement that plague modern networks. For example, nodes with lower security levels should not be able to probe or communicate with nodes of higher levels on specific TCP ports, and sensitive (e.g., top-secret) flows should not traverse a node of lower security level. In this context, for communication to be permitted and routed between two nodes, both nodes and any intermediate nodes must adhere to the \textit{"no read up, no write down"} policy.
\newline \indent
\textit{Lu et al.}~\cite{lu1990model} envisioned such an MLS model that enforces the information flow policy at network endpoints. The problem with this approach is that the scale and dynamic behavior of modern networks make deploying such an enforcement mechanism on every endpoint impractical. However, the inherent centralization of software-defined networks (SDNs) allows enforcement of a network-level MLS policy in a scalable and efficient manner. Determination of whether or not flows are permitted can be done by the controller, and the policy can be enforced by the data-plane switches in the form of flow rules. This offers the significant advantage over previous work of allowing the service to be provided transparently to the entire network, because network devices do not have to run specialized software. The controller's global view of the network also offers greater flexibility as changes to labels and policy can be reflected by simple changes to flow rules, as opposed to manually re-configuring individual devices.
}

Ultimately, multilevel security can ensure that the network achieves (to the degree possible) least-privilege isolation, where only explicitly allowed communication can flow within the network and must flow through secure paths. Hence, it provides for maximal isolation from unauthorized parties and therefore the smallest possible threat surface. Here, we can mitigate lateral movement (by enforcing least-privilege), eavesdropping (with secure routing paths), and packet spoofing (blocking unknown sources) by unauthorized entities---a significant improvement over prior work (see Table~\ref{tab:sdn_defense}). We note that compromised nodes may still be able to probe or eavesdrop nodes for which they have sufficient security level and categories; however, their capabilities are significantly restricted to only that allowed by policy, and they can quickly be quarantined (via flow rules) upon detection.

}
}

\section{MLSNet Overview}
\label{sec:defenseapproach}
{\color{black}
In this section, we present our threat model, lattice of security labels, and policy constraints for MLSNet.

\subsection{Threat Model and Assumptions}
\label{sec:threatmodel}
For the assignment of security labels, we assume a \textit{Network Security Officer (NSO)}, as defined by \textit{Lu et al.}~\cite{lu1990model}, who assigns appropriate security labels (i.e., levels and categories) to the network entities (e.g., endpoint devices and forwarding nodes). The assignment can be done by leveraging the controller as it has a global view of the network, and it must be based on a security assessment of the entities in the network. For example, endpoints with unpatched operating systems should be assigned a lower security level, as they are more likely to contain vulnerabilities than others with the latest software updates. IoT devices or forwarding nodes connected to third-party networks can also be considered less secure, and therefore should be assigned a lower security level and a restricted set of categories. In contrast, endpoints containing more sensitive (e.g., top-secret) data should have a higher security level assigned since information flow to nodes with lower levels should be prevented.

\textit{MLSNet} aims to protect confidentiality by preventing leakage to unauthorized entities. We assume that nodes connected to a network may become compromised and have malicious intentions. In this scenario, we aim to limit an adversary's ability to further compromise the network.

Additionally, we assume the controller has an accurate view of the topology (i.e., nodes have not been spoofed). MLS cannot detect all forms of packet spoofing, and we rely on other SDN-based defenses to detect packet spoofing against the topology discovery service~\cite{hong2015poisoning}.

\subsection{Multilevel Security Lattices for Computer Networks}
\label{sec:sdnlattices}
To compute an SDN-based network configuration (set of flow rules installed to SDN switches) that satisfies the security policy, we must first consider security levels and categories. As drawn from \textit{Denning}~\cite{denning1976lattice}, we order the security levels used in our model according to the following: \textit{TopSecret (4) $>$ Secret (3) $>$ \textit{Confidential (2)} $>$ Public (1)}. For an SDN, we define the security categories as the packet types supported by the OpenFlow~\cite{openflow} protocol for matching incoming packets to flow rules: TCP, ICMP, etc. Although, any number of levels and categories can be defined to separate classes of flows; we just use the above descriptions as one example for evaluation.

The combination of a level and one or more categories then forms the label at a node. These labels form a lattice, a partially ordered set that reflects the secrecy and privilege requirements of communication in the network. We consider a node's label to be higher than another node's if the former's level is greater than or equal to, and the categories form a superset of, the latter's. This can be seen in the sample lattice shown in Figure~\ref{fig:accesscontrollattice1}. We note that this construction may lead to incomparable labels, where neither label is a subset/superset of the other, in which case communication would be denied by default. This will preserve confidentiality but with the caveat that not every flow may be accommodated.

Given the labels, the controller will install flow rules to the SDN switches to allow communication only if the security constraints are satisfied.
}

\begin{figure}[t]
	\footnotesize
	\centering
	\includegraphics[width=0.49\textwidth]{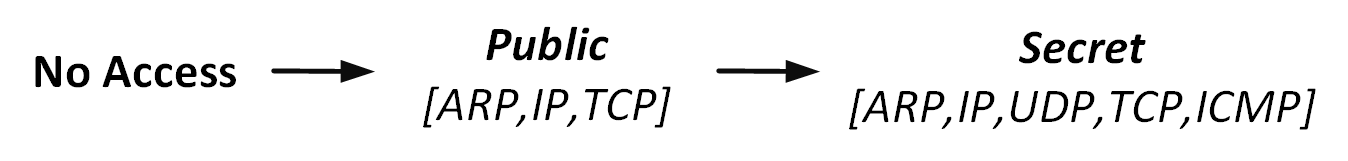}
	\small
	\caption{OpenFlow specific security lattice for networks used in our evaluations}
	\label{fig:accesscontrollattice1}
\end{figure}

\subsection{Security Policy Constraints}
\label{sec:policyconstraints}
In this section, we discuss the \textit{access control} and \textit{flow control} constraints, which form the basis of our security policy.

\textbf{Access Control.} As the first step to compute a security policy compliant network configuration, we determine if a subject (e.g., user or process) initiating communication with an object (e.g., file or resource) is allowed to exchange information with the object based on the security levels and categories. To define the constraints for access control, we have to consider if the subject is communicating with a receiver object (i.e., object receives from subject) or a provider object (i.e., subject sends to object). If the subject communicates with a provider object, then information flows from object $o$ to subject $s$; inversely, if object $o$ is a receiver object, then information flows from $s$ to $o$. In case the object is both a provider and receiver object at the same time, information flow between $s$ and $o$ is bidirectional. 

Considering these three cases, given security level $\sigma$ and categories $C$ of subject $s$ and object $o$, the authorized information flows are defined by a conventional MLS confidentiality model~\cite{blp76}. For a subject $s$ (e.g., workstation user) communicating with a provider object $o$, (e.g., mail server in the DMZ in Figure~\ref{fig:mls_scenario_1}), the following constraint must be satisfied:
\begin{equation}
\label{equ:accesscontrol_provobj}
\sigma_o \leq \sigma_s \text{ and } C_o \subseteq C_s, \forall o \in O, s \in S
\end{equation}
Secondly, for a subject $s$ (e.g., Wi-Fi client) communicating with a receiver object $o$ (e.g., network printer), the following constraint must be satisfied:
\begin{equation}
\label{equ:accesscontrol_recvobj}
\sigma_o \geq \sigma_s \text{ and } C_o \supseteq C_s, \forall o \in O, s \in S
\end{equation}
And for a subject $s$ (e.g., workstation user) communicating with an object $o$ that is both a provider and receiver (e.g., git repository), the following constraint must be satisfied:
\begin{equation}
\label{equ:accesscontrol_bothobj}
\sigma_o = \sigma_s \text{ and } C_o = C_s, \forall o \in O, s \in S
\end{equation}
\noindent 
Upon the initial arrival of a flow at the SDN controller from a subject, it can be determined if the subject $s$ is allowed to exchange information with object $o$ by considering the access control constraints. While we define our framework in a general way, security levels and categories are defined specific to SDNs, as discussed in Section~\ref{sec:sdnlattices}.

\textbf{Flow Control.} If the access control constraints are satisfied, information is allowed to flow between object $o$ and subject $s$. The next step before rule installation is for the controller to determine whether there exists a path between $o$ and $s$ such that the \textit{security level} of any forwarding node on the path between $o$ and $s$ is not lower than that of the flow. In Figure~\ref{fig:mls_scenario_1}, the security level of the switch connecting the publicly accessible Wi-Fi to the network is lower compared to the switches connected with the secure VPN, which are behind firewalls and only for internal users. Traversing lower classified nodes puts a flow at risk of being leaked to untrustworthy entities, being modified, or otherwise disrupted. Thus, protection of confidentiality is constrained by secure path selection, and MLSNet will choose an optimal path (if one exists) that satisfies this constraint for any candidate flow.

We can formulate such a constraint by stating that the security level $\sigma$ of a node $j$ on the path between $o$ and $s$ cannot be lower than the security level of the originating node of the flow. As with the access control constraint, we have to take into account whether a subject is communicating with a provider object, a receiver object, or an object that is both. If the subject is communicating with a provider object, then the following constraint must be satisfied:
\begin{equation}
\label{equ:flowcontrol_provobj}
\sigma_o \leq \sigma_j, \forall j\in V \text { on path } (o,s) \text{ for flow } f\in F
\end{equation}
Secondly, if the subject is communicating with a receiver object, then traffic is flowing from the subject toward the object, and the following constraint must be satisfied:
\begin{equation}
\label{equ:flowcontrol_recvobj}
\sigma_s \leq \sigma_j, \forall j\in V \text { on path } (o,s) \text{ for flow } f\in F
\end{equation}
Lastly, the access control constraint for communicating with an object that is a receiver and provider at the same time defines that $s$ and $o$ are required to have the same security level as stated in (\ref{equ:accesscontrol_bothobj}). Therefore, the flow control for such a case requires a forwarder node to have a security level that is higher or equal compared to the level of $s$ and $o$:
\begin{equation}
\label{equ:flowcontrol_bothobj}
(\sigma_s, \sigma_o)  \leq \sigma_j, \forall j\in V \text { on path } (o,s) \text{ for flow } f\in F
\end{equation}

In addition to the security labels, we also must consider the capacity $\kappa_{ij}$ of a link $(i,j)$ on a path between $s$ and $o$ for a flow with a size of $d^f$: $\kappa_{ij} \geq d^f \forall (i,j) \text{ on path } (o,s)$. As a trade-off for providing flow control, policy compliant paths may be longer than a shortest available path which does not consider a security policy. Additionally, in case two nodes satisfy the the access control constraint, there is no guarantee that a path between the nodes can be found which fulfills the flow control constraint. If such, there may be a path traversing nodes which \textit{do not} have a high enough security label. We refer to such cases as \textit{policy conflicts}. In Section~\ref{sec:secureproblem}, we present a model to minimize policy conflicts on flow paths. In short, it will find the best fitting configuration and report the exact locations on paths where policy conflicts exist. By deploying additional security mechanisms, such as \textit{declassification} via encrypted communication channels, such conflicts can be resolved, as we further discuss in Section~\ref{sec:declassification}.

\section{Policy Compliant Flows}
\label{sec:secpreservingsdn}
Given the policy constraints and security labels, we introduce optimization models to compute a flow-rule-based network configuration under consideration of policy compliance and resource availability. We first introduce an integer linear programming (ILP) model to maximize the number of flows strictly satisfying all security constraints. If no path meeting the required security constraints can be found for a flow $f$, the model will suggest to drop $f$. Further, we propose a second ILP-based optimization model which minimizes the sum of policy conflict values, given a topology and security labels, under the assumption that all flows $f\in F$ permitted by the access control constraint must be accommodated.

\subsection{Policy Compliant Flow Maximization Problem}
\label{sec:secureflowsstrict}
In this section, we introduce an ILP to maximize the number of flows that can be accommodated by a network under consideration of capacity and security constraints. We refer to this problem as the \textit{policy compliant flow maximization problem}, and formulate the constraints in (7). The optimization model shown determines if a network configuration fulfilling the defined security policy can be found to route the flows $F$ between the subjects $S$ and objects $O$. To compute a path, we first introduce a binary decision variable $x_{ij}^f$ to indicate if link $(i,j)\in E$ is used on the path for flow $f$ (i.e., $x_{ij}^f=1$) or not (i.e., $x_{ij}^f=0$). To decide if a flow $f$ can be accommodated, we also introduce the binary decision variable $\alpha^f$.
\newline \indent
In~\ref{con:sf_srcflow}, we add $\alpha^f$ to the link indication variable $x_{is}^f$ to trigger a flow $f$ at a subject $s$. To compute a path between the subject node $s$ and object node $o$, a flow $f$ is consumed at a node $o$, as stated in constraint~\ref{con:sf_dstflow}, by subtracting $\alpha^f$ from the link indication variable. In~\ref{con:sf_flowpres}, we state the flow preservation constraint to ensure that the sum of incoming flows into a node equals the sum of outgoing flows of a node. 

{
\begin{subequations}
	\label{equ:max_sec_flow}
	\footnotesize
	\begin{alignat}{3}
	& \max\ \sum_{f\in F} \alpha^f \label{con:sf_objective} &\\
	& \mbox{ s.t. } \nonumber& \\
	& \sum_{i:(i,s)\in E} x^f_{is} + \alpha^f = \sum_{j:(s,j)\in E} x^f_{sj}, & \forall f\in F& \label{con:sf_srcflow}\\
	& \sum_{j:(o,j)\in E} x^f_{oj} - \alpha^f = 0, & \forall f\in F \label{con:sf_dstflow}\\
	& \sum_{\enspace\enspace\thinspace i,j\in E} x^f_{ij} = \sum_{j,k\in E} x^f_{jk}, & \forall f\in F \label{con:sf_flowpres}\\
	& \sum_{\thinspace\thinspace i:(i,j)\in E} x_{ij}^f \leq 1, & \forall f\in F, j\in V\label{con:sf_xone}\\
	& \sum_{\quad\enspace\enspace f\in F} x^f_{ij}\cdot d^f \leq \kappa_{ij}, & \forall (i,j) \in E \label{con:sf_capacity}\\
	& \quad \alpha^f\cdot lev(\sigma_o,\sigma_s) = \alpha^f, & \hspace{-25mm} \forall o(f)\in O, s(f)\in S, f\in F\label{con:sf_access} \\
	& \quad \alpha^f\cdot cat(\lambda_o^c,\lambda_s^c) = \alpha^f,  &\nonumber\\
	& &\hspace{-5mm} \forall o(f)\in O, s(f)\in S, c\in C, f\in F \label{con:sf_seccat} \\
	& \quad x_{ij}^f\cdot orig(\sigma_o,\sigma_s) \leq x_{ij}^f\cdot \sigma_j, & \nonumber\\
	& & \hspace{-5mm} \forall (i,j) \in E, o(f)\in O, s(f)\in S, f\in F\label{con:sf_secclass} \\
	& \quad x^f_{ij} \in \{0,\: 1\}, & \forall (i,j)\in E, f\in F \nonumber\\
	& \quad \alpha^f \in \{0,\: 1\}, & \forall f\in F \nonumber
	\end{alignat}
\end{subequations}}

We add constraint~\ref{con:sf_xone} to limit the number of visits of a node to one for each flow.
Constraint~\ref{con:sf_capacity} ensures that the given capacity $\kappa_{ij}$ of a link $(i,j)\in E$ is not exceeded for forwarding flows over a link $i,j$ with a size of $d^f$ per flow. Typically, in bidirectional communication in computer networks the size of the request flow is different than the size of the reply flow. Since we assume symmetric routes, the flow size variable $d^f$ should be chosen to account for the flow size in both directions. Additionally, since new flow demands typically arrive at different times in a network, we can replace the above link capacity $\kappa_{ij}$ with the residual capacity $\widetilde{\kappa}_{ij}$ which states the remaining capacity on a link $(i,j)\in E$ considering the existing flows in a network traversing link $(i,j)$.

In constraints~\ref{con:sf_access} and~\ref{con:sf_seccat}, we define the access control properties. Constraint~\ref{con:sf_access} ensures that a flow $f$ between a subject $s$ and an object $o$ is only permitted if the function $lev(\sigma_o,\sigma_s)$, shown in (\ref{equ:seclevfct}), returns 1, indicating that the security levels of $s$ and $o$ allow communication:
\begin{equation}
\label{equ:seclevfct}
lev(\sigma_o, \sigma_s)=
\begin{cases}
  1, & \text{\textbf{if} o is provider object \textbf{and} } \sigma_o \leq \sigma_s  \\
  1, & \text{\textbf{if} o is receiver object \textbf{and} } \sigma_o \geq \sigma_s  \\
  1, & \text{\textbf{if} o is both \textbf{and} } \sigma_o = \sigma_s  \\
  0, & \text{otherwise}  \\
\end{cases}
\end{equation}

As defined in Section~\ref{sec:policyconstraints} for access control, we further have to ensure that the subject $s$ and object $o$ have the appropriate security categories before calculating a path. In function $cat(\lambda_o^c,\lambda_s^c)$ shown in (\ref{equ:seccatfct}), we model the requirement of security categories to allow a flow between $s$ and $o$:
\begin{equation}
\label{equ:seccatfct}
cat(\lambda_o^c,\lambda_s^c)=
\begin{cases}
  1 - (\lambda_o^c-\lambda_o^c\cdot\lambda_s^c), &\hspace{-2mm} \text{\textbf{if} o is provider}  \\
  1 - (\lambda_s^c-\lambda_s^c\cdot\lambda_o^c), &\hspace{-2mm} \text{\textbf{if} o is receiver}  \\
  1 - (\lambda_s^c - \lambda_o^c)\cdot(\lambda_s^c - \lambda_o^c) , &\hspace{-2mm} \text{\textbf{if} o is both}  \\
  0, &\hspace{-2mm} \text{otherwise}  \\
\end{cases}
\end{equation}
To mathematically define this, we introduce variable $\lambda_i^c$ which indicates if a node $i$ has a security category $c$, i.e., $\lambda_i^c=1$, or not, i.e., $\lambda_i^c=0$. As an example for the operation of function $cat()$, suppose a subject $s$ wants to communicate with a provider object $o$. In order to permit the flow, the constraint that $C_o \subseteq C_s$ must be satisfied. To evaluate if the security categories $C_o$ of an object are a subset of the categories in $C_s$, we introduce the formulation $1 - (\lambda_o^c-\lambda_o^c\cdot\lambda_s^c)$ as shown in (\ref{equ:seccatfct}). This will evaluate to $0$ if subject $s$ does not have a security category $c$, but object $o$ does, i.e., $(1 - (1-1\cdot 0)) = 0$. Such a case does not fulfill the access control constraint, and therefore the flow cannot be admitted, i.e., $\alpha_f=0$.

Function $cat()$ works in a similar way if $o$ is a receiver object. In case $o$ is both a provider and receiver, function $cat()$ evaluates to $1$ if $C_o = C_s$. As stated in constraint~\ref{con:sf_seccat}, the function $cat()$ has to return $1$ for all categories $c\in C$ for a flow $f$ between a subject $s(f)\in S$ and an object node $o(f)\in O$. Additionally, in constraint~\ref{con:sf_secclass}, we define the secure flow property to prevent information flow to lower classified nodes. Thus, for each next node $j$ on a link $(i,j)$ of a flow $f$, indicated by the decision variable $x_{ij}^f$, the security class of the originating node of flow $f$ (i.e., the subject if the object is a receiver, and the object otherwise) has to be less or equal to the security class at the next node $j$ on the path. We define function $orig(\sigma_o,\sigma_s)$ as shown in (\ref{equ:secfct}), where $orig()$ returns the security level depending on the type of object node $o$, according to the defined flow control constraint in Section~\ref{sec:policyconstraints}:

\begin{equation}
\label{equ:secfct}
orig(\sigma_o,\sigma_s)=
\begin{cases}
  \sigma_s, & \text{\textbf{if} o is receiver}  \\
  \sigma_o, & \text{otherwise}
\end{cases}
\end{equation}

\noindent This last constraint ensures that on a path between a subject $s$ and an object node $o$, no forwarding nodes with a lower security level compared to the security level of the originating node of the flow are visited. We then use the specified constraints~\ref{con:sf_srcflow}-\ref{con:sf_secclass} as the basis for our heuristic-based maximization algorithm discussed in the next section.

\subsection{Policy Compliant Flow Maximization Algorithm}
\label{sec:securealgorithm}
The linear programming model introduced in Section~\ref{sec:secureflowsstrict} is a \textit{binary integer programming} model, a special case of integer linear programming (ILP) since all variables are binary. Integer linear programming models are NP-hard problems in general, and the special case of binary integer programming is one of Karp's 21 NP-complete problems~\cite{karp1972reducibility}. Although solvers such as \textit{Gurobi}~\cite{gurobi} are efficient in computing a solution for such problems, binary integer programming models can be impractical to solve for certain inputs.

\begin{algorithm}[t]
	\caption{PolicyCompliantPath($G$,$s$,$o$,$d^f$)}
	\label{alg:secpathmax}
	\begin{algorithmic}[1]
		\STATE $V$ = nodes in $G$
		\FORALL{$v \in V$} 
		\STATE $dist[v]$ = infinity, $prev[v]$ = null
		\ENDFOR
		\STATE $dist[s]$ = $0$
		\STATE $N$ = nodes in $G$ 
		\IF{$lev(\sigma_o,\sigma_s)=1$ and $cat(\lambda_o^c,\lambda_s^c)=1, \forall \lambda_o^c\in C_o, \lambda_s^c\in C_s$} \label{alg:accesscon}
		\WHILE{$N$ not empty}
		\STATE $i$ = node in $N$ with smallest $dist[]$
		\STATE remove $i$ from $N$
		\FORALL{adjacent node $j$ of $i$}
		\IF {$orig(\sigma_o,\sigma_s) \leq \sigma_j$ and $d^f\leq \widetilde{\kappa}_{ij}$} \label{alg:flowseccon}
		\STATE $dist_{new}$ = $dist[i] + 1$
		\IF {$dist_{new} \leq dist[j]$} \label{alg:mindist}
		\STATE $dist[j]$ = $dist_{new}$
		\STATE $prev[j]$ = $i$
		\ENDIF 
		\ENDIF
		\ENDFOR
		\ENDWHILE
		\ENDIF 
		\STATE \textbf{return} $prev$
	\end{algorithmic}
\end{algorithm}

To address this issue, we also formulate a heuristic algorithm to compute a security compliant path between subjects and objects based on a modification of \textit{Djikstra's} shortest path algorithm. Algorithm~\ref{alg:secpathmax} is a modification of \textit{Dijkstra's} shortest path algorithm where we add the access control and secure flow control constraints, similar to the constraints presented in (\ref{equ:max_sec_flow}). Specifically, we formulate the access control constraint in line~\ref{alg:accesscon} based on the introduced functions $lev()$ as defined in (\ref{equ:seclevfct}) and $cat()$ as defined in (\ref{equ:seccatfct}). To compute a secure path between $s$ and $o$ we define the constraints in line~\ref{alg:flowseccon} to only consider an adjacent node $j$ of a link if the security level of node $j$ is greater or equal the security level of the originating node of flow $f$ and the link connecting node $i$ and $j$ has enough residual capacity to accommodate flow $f$. In effect, the introduced model and algorithm will compute paths that accommodate the maximum number of flows $f\in F$ between a subject node $s$ and an object node $o$, with consideration for security and link capacity.

\subsection{Policy Conflict Minimization Model}
\label{sec:secureproblem}
Finding a path fulfilling all security conditions might not always be possible considering the nature of real-world networks. In contrast to the previous model, here we assume that \textit{all} flows fulfilling the access control and link capacity constraints must be accommodated in the network, which may lead to policy conflicts. Policy conflicts are conditions where a flow is visiting a node on a path that has a lower security level than the transferred information (i.e., than the sender node), and we quantify a policy conflict as the numerical difference between those security levels. Considering the lattice in Section~\ref{sec:sdnlattices}, we assume that nodes classified as \textit{Confidential (2)} have a higher risk of being compromised than nodes classified as \textit{Secret (3)}. The goal here is to minimize policy conflicts; therefore, if information classified as \textit{Top Secret (4)} is transferred on a path with policy conflicts, it is preferable to select nodes with the smallest numerical difference (i.e., \textit{Secret (3)} nodes are preferred over \textit{Confidential (2)} nodes). 

Resolving policy conflicts requires additional security measures (e.g., \textit{declassification}). 
The larger a policy conflict (i.e., higher numerical difference in security levels), the more an additional security measure will cost, in terms of transmission time or computation overhead. By minimizing the numerical distance of policy conflicts, we aim to minimize the \textit{cost} required to apply additional security measures to meet a defined security policy.

To achieve this, we compute a network configuration in a two-step process. We first select a subset of the flows $F_l \subseteq F$ that fulfill the access control constraints, and second, compute paths between subjects and objects with the objective to minimize security policy conflicts. We define the access control constraints as follows:
\begin{equation}
\begin{split}
\label{equ:flowaccess}
F_l = & \{ f\in F: lev(\sigma_o,\sigma_s)=1 \text{ and } cat(\lambda_o^c,\lambda_s^c)=1, \\
& \forall \lambda_o^c\in C_o, \forall \lambda_s^c\in C_s, o \in O, s \in S\}
\end{split}
\end{equation}

Next, for the set of legitimate flows $F_l$, we also define an objective function, \textit{conf}, to find a network configuration that accommodates all flows in $F_l$ while minimizing the policy conflicts on a path of a flow $f\in F_l$ between a subject $s$ and an object $o$. The function returns the difference between the security level of the flow's originating node, given by $orig(\sigma_o,\sigma_s)$, and the security level $\sigma_j$ of a node $j$ on the path between $s$ and $o$ if $\sigma_j < orig(\sigma_o,\sigma_s)$. More formally:
\begin{equation}
\label{equ:viofct}
conf(\sigma_o,\sigma_s,\sigma_j)=
\begin{cases}
  orig(\sigma_o,\sigma_s) - \sigma_j, & \text{\textbf{if}}\ \sigma_j < orig(\sigma_o,\sigma_s)  \\
  0, & \text{otherwise}
\end{cases}
\end{equation}
We aim to minimize the policy conflicts caused by visited nodes with lower security levels. Assuming a flow originates from a node $o$, we define the severity of the policy conflict by the numerical distance from level $\sigma_o$ of node $o$ to a node $j$ with level $\sigma_j$, if $\sigma_j < \sigma_o$. Then, choosing a node $j$ over a node $h$, where $(\sigma_o-\sigma_j) < (\sigma_o - \sigma_h)$, is preferable. And for selecting a secure path, we want to give preference to these nodes with a smaller difference in security level with the originating node, even if such a path is significantly longer than the shortest path. To model this, we introduce a factor $\gamma$ and define our objective function as follows:
\begin{equation}
\label{equ:secdiffobj}
\min\:\sum_{f\in F_l}\sum_{i,j\in E} x^f_{ij}\cdot \gamma^{conf(\sigma_o(f),\sigma_s(f),\sigma_j)}
\end{equation}
In (\ref{equ:secdiffobj}), $x^f_{ij}$ denotes the decision variable if link $(i,j)$ is selected as part of the path between $s$ and $o$ for a flow $f$. In the objective function as shown in (\ref{equ:secdiffobj}), $\sigma_o(f)$ denotes the security level of object node $o$ of a flow $f$, $\sigma_s(f)$ denotes the security level of subject node $s$ of a flow $f$. The security level of a node $j$ on the path between $s$ and $o$ is defined by $\sigma_j$.
The factor $\gamma$ controls the length of a path that should be chosen as a trade-off for visiting nodes with a smaller distance in terms of security levels. We visualize this in an example shown in Figure~\ref{fig:pathselection}. 

\begin{figure}
	\centering
	\includegraphics[width=0.33\textwidth]{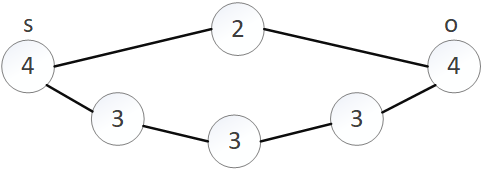}
	\caption{Path selection based on security levels.}
	\label{fig:pathselection}
\end{figure}
Considering this small network, two possible paths exist between $s$ and $o$. The numbers in the nodes indicate their security level. If we select a factor $\gamma=4$, the top path results in a value of $4^{4-2}=16$, while the bottom path has a \textit{smaller} policy conflict value of $3\cdot 4^{4-3}=12$, and thus would be selected. In contrast, if we select a smaller value for $\gamma$ (e.g., $\gamma=2$) then considering the top path, a transition from the node with security level $4$ to the node with security level $2$ has to be made, resulting in a policy conflict value of $2^{4-2}=4$. Computing the policy conflict value for the bottom path would result in $3\cdot 2^{4-3}=6$, since we have to visit three nodes with a difference in the security level of $1$. Based on this, the top path would be selected, although from a security perspective, the bottom path may be more preferable since nodes with a smaller security level difference are visited. This example shows that the factor $\gamma$ controls the selection of longer paths visiting nodes with a smaller security level difference. To always select paths with the smallest security level difference, $\gamma$ can be set to the network diameter $+$ $1$, in terms of hop count, which we prove as follows.
\begin{lemma}
\label{lem:secpath}
To select a longer path with lower policy conflicts, we must set $\gamma$ to the maximum path length $+$ $1$. 
\end{lemma}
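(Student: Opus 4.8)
The plan is to interpret each path's contribution to the objective in (\ref{equ:secdiffobj}) as an integer written in base $\gamma$, and then to show that once $\gamma$ exceeds the largest possible hop count, this encoding enforces a lexicographic preference in which the most severe policy conflict on a path dominates the comparison. First I would observe that the per-flow summand $\sum_{i,j\in E} x^f_{ij}\,\gamma^{conf(\sigma_o(f),\sigma_s(f),\sigma_j)}$ depends only on the route chosen for flow $f$; the sole coupling between flows is the capacity constraint~(\ref{con:sf_capacity}), which restricts the feasible set but never changes the ranking of two routes for the same flow. It therefore suffices to fix a single flow $f$ with a fixed origin and compare candidate $s$--$o$ paths by their own contribution.

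Fix such a flow and a candidate path $P$. For each integer conflict level $c\ge 0$, let $n^P_c$ count the edges $(i,j)$ on $P$ whose head $j$ has $conf(\sigma_o,\sigma_s,\sigma_j)=c$, so that $\mathrm{obj}(P)=\sum_{c\ge0} n^P_c\,\gamma^{c}$. The structural fact I would exploit is that constraint~(\ref{con:sf_xone}) forces every path to visit each node at most once, hence to be simple: $P$ has at most $L$ edges, where $L$ denotes the maximum path length in hops. Thus $\sum_c n^P_c\le L$, so in particular every coefficient satisfies $n^P_c\le L$. Choosing $\gamma=L+1$ then gives $0\le n^P_c<\gamma$, i.e.\ the $n^P_c$ are bona fide base-$\gamma$ digits, which is precisely what makes the numeral comparison behave lexicographically.

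The crux is the comparison step. Take two candidate paths $P_1,P_2$ and let $m_1,m_2$ be the largest conflict levels appearing on them; suppose $m_1<m_2$, so $P_1$ never incurs a conflict as severe as the worst on $P_2$. Because conflict levels are integers, $m_2\ge m_1+1$, and I would chain
\[
\mathrm{obj}(P_1)=\sum_{c\le m_1} n^{P_1}_c\,\gamma^{c}\le \gamma^{m_1}\sum_{c\le m_1} n^{P_1}_c\le L\,\gamma^{m_1}<(L+1)\gamma^{m_1}=\gamma^{m_1+1}\le\gamma^{m_2}\le\mathrm{obj}(P_2),
\]
the final step using that $P_2$ has at least one edge at level $m_2$. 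Hence $P_1$ is strictly preferred \emph{irrespective of how many additional hops it uses}, which is exactly the ``select a longer path with lower policy conflicts'' behavior asserted. Repeating the argument on paths that tie in their leading level (peel off that digit and recurse) upgrades this to the full statement that the minimizer is the path that is lexicographically smallest in its vector of conflict counts, read from the highest level downward---the path with the smallest security-level difference.

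Finally, to justify that $L+1$ is the threshold one \emph{must} reach, I would note that any $\gamma\le L$ permits digit overflow: $\gamma$ hops at level $m-1$ already contribute $\gamma\cdot\gamma^{m-1}=\gamma^{m}$ and can outweigh a rival path carrying a single, strictly worse conflict at level $m$; the $\gamma=2$ calculation in Figure~\ref{fig:pathselection} realizes exactly this failure. The main obstacle I expect is not the arithmetic but stating the conclusion correctly: pinning down that ``smallest security-level difference'' means the lexicographic order on per-level conflict counts rather than merely minimizing the single worst hop, and verifying that the quantity $\gamma$ must exceed is the hop count $L$ (since the sum in~(\ref{equ:secdiffobj}) ranges over edges) and not, say, the vertex count.
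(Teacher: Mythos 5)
Your proof is correct, and at its core it rests on the same inequality as the paper's: one hop at conflict level $m$ costs $\gamma^{m}$, which exceeds $y\cdot\gamma^{m-1}$ exactly when $\gamma>y$, so taking $\gamma$ larger than the maximum hop count makes a single worse conflict dominate any number of milder ones. Where you diverge is in the packaging, and your version is genuinely stronger. The paper's proof compares only a single candidate node of conflict $a$ against a detour of $y$ nodes \emph{all} sharing one lower conflict value $b$, and then waves at larger differences via $\gamma^{q}<\gamma^{q+1}$; your base-$\gamma$ digit decomposition $\mathrm{obj}(P)=\sum_{c}n^{P}_{c}\gamma^{c}$ with $n^{P}_{c}\le L<\gamma$ handles paths carrying arbitrary mixtures of conflict levels and yields the clean conclusion that the minimizer is lexicographically smallest in its conflict-count vector read from the worst level down. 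You also supply a necessity direction (the $\gamma=2$ overflow matching Figure~\ref{fig:pathselection}), which the paper does not attempt. Two small points to tidy: your claim that $\gamma$ hops at level $m-1$ ``outweigh'' one hop at level $m$ is only an equality ($\gamma\cdot\gamma^{m-1}=\gamma^{m}$), so strict overflow needs $\gamma$ strictly less than the hop count or at least one extra edge on the longer path; and the simple-path bound you invoke via constraint~(\ref{con:sf_xone}) belongs to the flow-maximization ILP~(\ref{equ:max_sec_flow}), not the conflict-minimization program~(\ref{equ:optmial_net_sec}), which omits that constraint---you should instead observe that every edge contributes at least $\gamma^{0}=1>0$ to the objective, so an optimal solution never contains cycles and the bound $\sum_{c}n^{P}_{c}\le L$ holds at the optimum anyway.
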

\begin{proof}
Assuming a flow originating from an object $o$, we have to show that for a path of flow $f$ defined by a set of links $(i,j)\in E$ indicated by the decision variable $x_{ij}^f$, the value of policy conflicts specified as $\sum_{i,j\in E} x^f_{ij}\cdot \gamma^{\sigma_o - \sigma_j}$ and assuming that $\forall j, \sigma_j < \sigma_o$, is larger for a path with higher policy conflicts than a potentially longer path with a lower conflict value if $\gamma$ is chosen appropriately. Given a candidate node for the path of flow $f$ with a policy conflict of $a$, we want to choose $\gamma$ so that a potentially longer path $y$ over a set of nodes with a lower policy conflict of $b$ is selected, i.e., $\gamma^{a} > y\cdot \gamma^{b}$. Since $a > b$, we can express $b$ as $a - q$, where $q$ is the numerical difference of the security levels of $a$ and $b$, i.e., $q=a - b$. By replacing $b$ with $a - q$, we can write the inequality above as $\gamma^{a}\cdot \gamma^{q} > y\cdot \gamma^{a}$. Assuming the smallest absolute difference of two unequal security classes, i.e., $q=1$, the inequality above can be written as $\gamma > y$. Therefore, we can say that in order to select a path that is $y$ hops longer, over a set of nodes with a lower policy conflict value, we have to select a value for $\gamma$ that is at least $y+1$. This also holds for larger security differences, since $\gamma^q < \gamma^{q+1}$ holds true for positive values of $q$.
\end{proof}

\subsection{Policy Conflict Minimization Problem}
\label{sec:secureflows}
To minimize the policy conflicts on a path, we formulate the optimization problem as an integer linear programming (ILP) model. We refer to this problem as the \textit{security policy conflict minimization problem}, as shown in (\ref{equ:optmial_net_sec}).

To trigger a flow at a node $s$, \ref{con:srcflow} adds $1$ to the decision variable $x_{is}^f$. In our formulation to compute a path from $s$ to $o$, the flow is consumed at node $o$ as stated in constraint~\ref{con:dstflow}. In~\ref{con:flowpres}, we state the flow preservation constraint to ensure that the sum of incoming flows to a node equals the sum of outgoing flows. Constraint~\ref{con:capacity} ensures that the given capacity $\kappa_{ij}$ of a link $(i,j)\in E$ is not exceeded by forwarding flows $f\in F_l$ with a size of $d^f$ per flow. In~\ref{con:capacity}, we assume that the flow size $d^f$ is chosen to include traffic between $s$ and $o$ in both directions since we assume symmetric routes. Since new flow demands typically arrive at different times in a network, we can replace the above link capacity $\kappa_{ij}$ with the residual capacity $\widetilde{\kappa}_{ij}$ which states the remaining capacity on a link $(i,j)\in E$ considering the existing flows traversing link $(i,j)$. Accordingly, with the specified constraints~\ref{con:srcflow}-\ref{con:capacity}, the introduced model will compute a path for every flow $f\in F_l$ between a subject node $s(f)$ and an object node $o(f)$ with the objective function as defined in \ref{con:objective}.

\begin{subequations}
	\label{equ:optmial_net_sec}
	\footnotesize
	\begin{alignat}{3}
	& \min\:\sum_{f\in F_l} \sum_{i,j\in E} x^f_{ij}\cdot \gamma^{conf(\sigma_{o}(f),\sigma_{s}(f),\sigma_j)} \label{con:objective} &\\
	& \mbox{ s.t. } \nonumber& \\
	& \sum_{i:(i,s)\in E} x^f_{is} + 1 = \sum_{j:(s,j)\in E} x^f_{sj}, & \forall f\in F_l & \label{con:srcflow}\\
	& \sum_{j:(o,j)\in E} x^f_{oj} - 1 = 0, & \forall f\in F_l \label{con:dstflow}\\
	& \enspace\thinspace \sum_{i,j\in E} x^f_{ij} = \sum_{j,k\in E} x^f_{jk}, & \forall f\in F_l \label{con:flowpres}\\
	& \enspace\thinspace\thinspace \sum_{f\in F_l} x^f_{ij}\cdot d^f \leq \kappa_{ij}, & \forall (i,j) \in E \label{con:capacity}\\
	& \enspace\enspace x^f_{ij} \in \{0,\: 1\}, & \forall (i,j)\in E, f\in F_l \nonumber
	\end{alignat}
\end{subequations}

\subsection{Policy Conflict Minimization Algorithm}
\label{sec:minsecvioalgo}
As discussed in Section~\ref{sec:securealgorithm}, ILP models with binary integer variables, such as (\ref{equ:optmial_net_sec}), are typically NP-hard and can be impractical to solve for certain input sequences. To address this, we also propose a heuristic algorithm to approximate an optimal solution and replace the objective to find the shortest path with the objective to compute a path with the smallest sum of policy conflict values (Algorithm~\ref{alg:secviomin}, lines~\ref{alg:objmin}-\ref{alg:flowsecconmin}).
\newline \indent
Since Algorithms \ref{alg:secpathmax} and \ref{alg:secviomin} are based on \textit{Dijkstra's shortest path algorithm}, we can express their time complexity as $O(|F|\cdot(|E|+|V|log|V|))$ for a number of $|F|$ flows.

\begin{algorithm}[t]
	\caption{MinConflictPath($G$,$s$,$o$,$d^f$)}
	\label{alg:secviomin}
	\begin{algorithmic}[1]
		\STATE $V$ = nodes in $G$
		\FORALL{$v \in V$} 
		\STATE $conf[v]$ = infinity, $prev[v]$ = null
		\ENDFOR
		\STATE $conf[s]$ = $0$, $N$ = nodes in $G$ 
		\IF{$lev(\sigma_o,\sigma_s)=1$ and $cat(\lambda_o^c,\lambda_s^c)=1, \forall \lambda_o^c\in C_o, \lambda_s^c\in C_s$} \label{alg:accessmin}
		\WHILE{$N$ not empty}
		\STATE $i$ = node in $N$ with smallest $conf[]$
		\STATE remove $i$ from $N$
		\FORALL{adjacent node $j$ of $i$}
		\IF {$d^f\leq \widetilde{\kappa}_{ij}$} \label{alg:capamin}
		\STATE $conf_{new}$ = $conf[j] + \gamma^{conf(\sigma_o,\sigma_s,\sigma_j)}$ \label{alg:objmin}
		\IF {$conf_{new} \leq conf[j]$} \label{alg:flowsecconmin}
		\STATE $conf[j]$ = $conf_{new}$
		\STATE $prev[j]$ = $i$
		\ENDIF 
		\ENDIF
		\ENDFOR
		\ENDWHILE
		\ENDIF 
		\STATE \textbf{return} $prev$
	\end{algorithmic}
\end{algorithm}

\subsection{Resolving Policy Conflicts}
\label{sec:declassification}
As we've shown in the previous section, paths with sufficient security levels and capacity may not always exist for two nodes permitted to communicate. In such a case, additional security mechanisms must be applied on the flow in order to be policy compliant. These mechanisms typically involve a cost to implement (e.g., increased transmission delay, processing time, or capacity), which the latter model aims to minimize.

An important mechanism for resolving policy conflicts is \textit{declassification}, which is the process of lowering the security level of the information. \textit{Sabelfeld et al.}~\cite{sabelfeld2009declassification} discuss a general framework for declassification by defining the dimensions of information release, including: \textit{what} information is released, \textit{who} releases the information, \textit{where} information is released, and \textit{when} it is released. By analyzing these dimensions, a 
\noindent network operator is then able to evaluate the risks and benefits of declassification to resolve certain security policy conflicts.
In our framework, declassification can involve lowering a flow's security level so it can traverse a path with lower classified nodes than the originating node. Methods to achieve this include traffic camouflaging techniques \cite{cai2014cs} or VPNs to defend against traffic analysis. Resolving policy conflicts can also be achieved by the NSO relabeling certain nodes in the network (e.g., increasing the security level of forwarding nodes) after upgrading the security measures on a switch and re-evaluating its security level. Our proposed optimization model to minimize policy conflicts will point out exactly which components of the network topology are causing conflicts; therefore, relabeling of nodes can be a permanent solution to policy conflicts which may reoccur.

\section{Secure Flow Rule Construction}
\label{sec:secureflowrules}
To realize a policy compliant network configuration, in the following we define a set of principles for the construction of secure flow rules which preserve confidentiality.

\subsection{Isolating Flows}
\label{sec:sdn_iso_flows}
Attacks that exploit the composition of flow rules are effective because the matching criteria often only identifies packets by a limited set of header fields, as discussed by \textit{Achleitner et al.}~\cite{achleitner2017adversarial}. If the flow rules are only matching packets against header fields of a certain network layer (e.g., IP addresses), then the information in other layers will be seen as \textit{"wild cards"} and thus will be ignored. This problem motivates the construction of SDN flow rules with consideration of information spanning all network layers.

The OpenFlow protocol~\cite{openflow} defines a set of matching fields supporting different network layers. Multiple endpoints may share lower layer fields such as physical ingress port; thus, to differentiate them and identify their security levels, we must include fields from higher network layers (e.g., IP or Ethernet addresses). But security leaks caused by the exchange of certain packet types in SDN-enabled networks~\cite{hong2015poisoning,dhawan2015sphinx,achleitner2017adversarial} motivate the use of categories in a security lattice to offer finer granularity of information exchange in SDN flow rules. Therefore, we derive these categories from additional packet header fields (e.g., $ARP$, $IP$, $TCP$, $UDP$ and $ICMP$), and use them in enforcing the security policy.

Additionally, with this general framework, a security category can be defined with even finer granularity. For example, by specifying field subtypes: \textit{ICMP type 8 code 0}, to allow ping packets. Thus, the various header fields allow greater flexibility when defining the security policy, and unlike traditional networks, the policy can be efficiently managed by sending \textit{flow\_mod} messages to the forwarding nodes to update their routing tables.

\begin{algorithm}[t]
	\caption{GenerateFlowRule($R$,$A$,$P$,$next$)}
	\label{alg:genrulealgo}
	\begin{algorithmic}[1]
		\STATE $rule.append("match:")$ \label{genrulealgo:match}
		\FORALL{$r\in R$} 
		\IF {$r\in P$} \label{genrulealgo:matchfields}
		\STATE $rule.append(r=P(r))$ \label{genrulealgo:addmatchfield}
		\ENDIF
		\ENDFOR
		\STATE $rule.append("action:")$ \label{genrulealgo:actions}
		\IF {$next$ != $drop$} \label{genrulealgo:deny}
		\FORALL{$a\in A$} 
		\STATE $rule.append(a)$ \label{genrulealgo:actionfields}
		\ENDFOR
		\STATE $rule.append("next")$ \label{genrulealgo:output}
		\ELSE
		\STATE $rule.append("drop")$
		\ENDIF
		\STATE \textbf{return} $rule$
	\end{algorithmic}
\end{algorithm}

\subsection{Constructing Secure Flow Rules}
\label{sec:sdn_sec_rules}
In SDN-enabled networks, we must consider the assigned security labels during the construction of flow rules at the controller. As described previously, security leaks can arise with imprecise matching criteria. Considering this, we must construct precise flow rules which ensure that only packets fulfilling the defined security level and category constraints can be transmitted. More formally, for secure rule construction, we represent the set of fields supported for network layer $N_i\in N$, where $N$ is the set of all network layers, as $R_{Ni}$. Then, the superset of fields to be matched against some packet $P$ during secure rule construction can be realized by taking the union of sets, which we denote as $R$:
\begin{equation}
\label{equ:ruleinfoset}
R = \bigcup\limits_{N_i \in N : \forall N_i\leq N_P} R_{N_i}
\end{equation}
This formulation ensures that---after satisfying access control constraints---a flow is \textit{isolated} and handled correctly according to its security categories (i.e., packet fields).

Besides matching criteria, OpenFlow also defines action sets, specifying what actions to apply on matched packets. These include a required action part (e.g., forwarding or dropping) and optional actions (e.g., rewriting packet header fields). We denote the required action part as $next$, which specifies to either send a packet to a specific output port or to drop it. Similarly, we specify the set of optional actions, such as rewriting addresses, as the action set $A$.

Based on sets $R$ and $A$ for a packet $P$, we formulate Algorithm~\ref{alg:genrulealgo}. We begin rule construction by defining the \textit{matching} part of a flow rule in line~\ref{genrulealgo:match}. We continue to iterate through the set of matching fields in $R$, as defined in Equation~\ref{equ:ruleinfoset}, and check in line~\ref{genrulealgo:matchfields} if a specified field $r$ can be applied to a value in packet $P$. If this evaluates to true, we add the matching field $r$ and its associated value $P(r)$ to the flow rule in line~\ref{genrulealgo:addmatchfield}.
In line~\ref{genrulealgo:actions}, we add the action part of a flow rule and check in line~\ref{genrulealgo:deny} if the action is to drop the packet. In case we specify a rule to drop packets of a flow with specific protocol types, it must be ensured that the priority of that rule is higher than other rules for the flow which may allow forwarding for other protocol types (i.e., allowing most IP traffic, but disallowing any UDP over IP).

To resolve any flow rule conflicts between rule actions, we refer to existing frameworks such as \textit{Porras et al.}~\cite{porras2012security}. Nonetheless, if the packet is forwarded, then the set of action fields and the output port are added to the rule, as shown in lines~\ref{genrulealgo:actionfields} and \ref{genrulealgo:output}. With this construction, we are able to properly generate a secure flow rule configuration and isolate the flows to ensure that confidentiality of information flow is preserved in the network.

\section{Evaluation}
\label{sec:evaluation}
\begin{table*}
	\centering
	\caption{Flow maximization benchmark.}
	\label{tab:maxbenchmarknew}
    \begin{tabular}{|c|c|m{2.25em}|m{2.25em}||m{2.25em}|m{2.25em}|m{2.25em}||m{2.25em}|m{2.25em}|m{2.25em}||m{2.25em}|m{2.25em}|m{2.25em}|}
		\cline{2-13}
		\multicolumn{1}{c|}{} & \multicolumn{3}{c||}{AS network} & \multicolumn{3}{c||}{k=8} & \multicolumn{3}{c||}{k=12} & \multicolumn{3}{c|}{k=16} \\ \hline
		\# Lattice Levels & \makecell{2 lev.} & \makecell{3 lev.} & \makecell{4 lev.} & \makecell{2 lev.} & \makecell{3 lev.} & \makecell{4 lev.} & \makecell{2 lev.} & \makecell{3 lev.} & \makecell{4 lev.} & \makecell{2 lev.} & \makecell{3 lev.} & \makecell{4 lev.} \\ \hline
		LP (no congestion) & $85\%$ & $75\%$ & $65.7\%$ & $79.5\%$ & $64.6\%$ & $58.3\%$ & $79.5\%$ & $66.6\%$ & $59\%$ & $81\%$ & $68.4\%$ & $64\%$ \\ \hline
		HA (no congestion) & $79.6\%$ & $71\%$ & $63.1\%$ & $63\%$ & $52\%$ & $50.6\%$ & $66.5\%$ & $57\%$ & $51.1\%$ & $68.5\%$ & $61.1\%$ & $53\%$ \\ \hline
		LP (cong. network) & $53.4\%$ & $45.2\%$ & $37.4\%$ & $45.5\%$ & $39.3\%$ & $34.3\%$ & $56\%$ & $46.5\%$ & $36.5\%$ & $56\%$ & $50.6\%$ & $36.5\%$ \\ \hline
		HA (cong. network) & $51.9\%$ & $41.8\%$ & $35.8\%$ & $44.5\%$ & $33.5\%$ & $31.9\%$ & $50\%$ & $38\%$ & $32.3\%$ & $52.3\%$ & $45.8\%$ & $34.5\%$ \\ \hline
	\end{tabular}
\end{table*}
\begin{table*}
	\centering
	\caption{Policy conflict minimization benchmark.}
	\label{tab:minbenchmarknew}
    \begin{tabular}{|c|c|m{2.25em}|m{2.25em}|m{2.25em}||m{2.25em}|m{2.25em}|m{2.25em}||m{2.25em}|m{2.25em}|m{2.25em}||m{2.25em}|m{2.25em}|m{2.25em}|}
		\cline{3-14}
		\multicolumn{1}{c}{} & \multicolumn{1}{c|}{} & \multicolumn{3}{c||}{AS network} & \multicolumn{3}{c||}{k=8} & \multicolumn{3}{c||}{k=12} & \multicolumn{3}{c|}{k=16} \\ \cline{1-14} 
		\multicolumn{2}{|c|}{\# Lattice Levels} &
		\makecell{2 lev.} & \makecell{3 lev.} & \makecell{4 lev.} & \makecell{2 lev.} & \makecell{3 lev.} & \makecell{4 lev.} & \makecell{2 lev.} & \makecell{3 lev.} & \makecell{4 lev.} & \makecell{2 lev.} & \makecell{3 lev.} & \makecell{4 lev.} \\ \hline
		LP & no conflict & $85\%$ & $75\%$ & $65.7\%$ & $79.5\%$ & $64.6\%$ & $58.3\%$ & $79.5\%$ & $66.6\%$ & $59\%$ & $81\%$ & $68.4\%$ & $64\%$ \\ \cline{2-14}
		\multicolumn{1}{|c|}{} & 1 lev. diff. & $15\%$ & $21.5\%$ & $29.3\%$ & $20.5\%$ & $24.6\%$ & $24\%$ & $20.5\%$ & $26\%$ & $28.5\%$ & $19\%$ & $22\%$ & $20\%$ \\ \cline{2-14}
		\multicolumn{1}{|c|}{} & 2 lev. diff. & - & $3.5\%$ & $4.3\%$ & - & $10.8\%$ & $12.7\%$ & - & $7.4\%$ & $10\%$ & - & $9.6\%$ & $11\%$ \\ \cline{2-14}
		\multicolumn{1}{|c|}{} & 3 lev. diff. & - & - & $0.7\%$ & - & - & $5\%$ & - & - & $2.5\%$ & - & - & $5\%$ \\ \hline
		HA & no conflict & $79.6\%$ & $71\%$ & $63.1\%$ & $63\%$ & $52\%$ & $50.6\%$ & $66.5\%$ & $57\%$ & $51.1\%$ & $68.5\%$ & $61.1\%$ & $53\%$ \\ \cline{2-14}
		\multicolumn{1}{|c|}{} & 1 lev. diff. & $20.4\%$ & $21.5\%$ & $24.6\%$ & $37\%$ & $25.7\%$ & $22.1\%$ & $33.5\%$ & $29\%$ & $27.5\%$ & $31.5\%$ & $23.2\%$ & $23\%$ \\ \cline{2-14}
		\multicolumn{1}{|c|}{} & 2 lev. diff. & - & $7.5\%$ & $6.6\%$ & - & $22.3\%$ & $22.6\%$ & - & $14\%$ & $14.7\%$ & - & $15.7\%$ & $14\%$ \\ \cline{2-14}
		\multicolumn{1}{|c|}{} & 3 lev. diff. & - & - & $5.7\%$ & - & - & $4.7\%$ & - & - & $6.7\%$ & - & - & $10\%$ \\ \hline
	\end{tabular}
\end{table*}

{\color{black} 

With an MLS policy, adversarial capabilities (in terms of probing, eavesdropping, and lateral movement) are by definition restricted to only that allowed by policy. Here, we still want to be able to route all legitimate flows. In the following, we demonstrate that (in comparison to not enforcing a security policy) a network administrator can still provide strong coverage of network flows. The goal of our approach is to achieve this, while also reducing the security cost associated with guaranteeing all flows be routed. In Table~\ref{tab:maxbenchmarknew}, we report on the performance of our framework to find policy compliant paths for flows in various topologies and with lattices of different sizes. In Table~\ref{tab:minbenchmarknew}, we report MLSNet's performance when minimizing the policy conflicts, where accommodating the remaining flows may require routing along paths containing nodes with a lower security-level than required. We further show MLSNet's ability to mitigate common attacks (see Section~\ref{sec:securitydiscussion}), such as those executed by the recently proposed reconnaissance tool SDNMap~\cite{sdnmap, achleitner2017adversarial}.
}

\subsection{Accommodating Network Flows}
\label{sec:securitycost}
To evaluate the ability of our framework to maximize the number of policy compliant flows and minimize policy conflicts on paths, we test the introduced linear programming (LP) models and heuristic algorithms (HA) on different network topologies. We first consider a realistic autonomous system (AS) network topology. Then, to model common data-center and cloud topologies, we consider different \textit{k}-ary fat-tree networks~\cite{al2008scalable}, where \textit{k} is the port density of each switch in the network (e.g., 8, 12, and 16 ports).  We consider lattices of 2-4 security levels which are evenly distributed and randomly assigned to the nodes in a network. To generate flows, we randomly pick source and destination node pairs which fulfill the access control constraint and compute paths with our linear program models and heuristic algorithms. Further, we consider networks with different link capacities to simulate congestion. Our results for flow maximization and policy conflict minimization (for the remaining flows) are averaged over several runs and shown in Tables~\ref{tab:maxbenchmarknew} and~\ref{tab:minbenchmarknew}.

{\color{black} 
We explore the number of flows able to be routed in the AS$3257$ Rocketfuel~\cite{spring2002measuring_long} topology (161 nodes, 656 links), as well as \textit{8}-ary (208 nodes, 384 links), \textit{12}-ary (612 nodes, 1296 links), and \textit{16}-ary (1344 nodes, 3072 links) fat-tree networks. For flow maximization, shown in Table~\ref{tab:maxbenchmarknew}, our framework shows that a majority of flows was always routed securely. For any of the topologies, the number of flows routed reached, at a maximum, approximately $85\%$ coverage. We also observe, for any of the topologies, that the number of flows routed securely decreases as the number of security levels increases (from left to right in any row). However, even at 4 security levels (common in military networks), a majority of flows was routed securely in the noncongested network. Certainly, congestion dynamics (here, randomly chosen), node labels (here, random), and the different traffic types vary with different networks and will affect the number of flows able to be routed, although this situation can be remedied with conflict minimization. Nonetheless, the results demonstrate that the framework is in fact feasible in several network topologies of different sizes.
\newline \indent
In the case of congested networks, we observe that the linear programming models show slightly better performance in terms of flow assignment compared to the heuristic algorithms. We note that despite the low flow coverage ($\sim50\%$) because new flows could not be supported by the links at some specific time, rules may still have a scheduled install at a delayed time (i.e., when the links can support the new flows), so permitted flows do not necessarily have to be discarded.
\newline \indent
The key insight here is that the randomness used in generating the networks did not hinder the flow coverage. Specifically, in real data-center or cloud networks, edge switches may carry similar traffic~\cite{benson2010network} and thus have similar security levels and only be limited by the available capacity (i.e., not the security levels). These results show that even in the worst case of random level assignment---where for example higher-level nodes may be surrounded by lower-level ones and thus cannot communicate without a policy conflict---paths (even if longer) can be found for a majority of flows.
\newline \indent
Given that a majority of flows can be routed according to policy, we evaluate our model's ability to route the remaining flows (in the noncongested case) to guarantee availability.  Table~\ref{tab:minbenchmarknew} shows that all flows can be routed with minimal policy conflict along the allowed path.} We define policy conflicts as the scenario where a node is visited on a flow path that has a lower security level than the transferred information (i.e., the originating node), and quantify it as the difference of the security levels. {\color{black} Minimizing the conflicts also minimizes the additional security measures needed to protect the flows traversing unsafe links (e.g., via stronger encryption). 
\newline \indent
{\color{black}
For the AS network, paths with no conflict can be found for the majority of flows, while most of the remaining flows only impose a conflict of one security level difference. Less than $5\%$ of flows must be routed through even less secure paths in order to guarantee availability. 
The case is similar for the fat-tree networks; the majority of flow paths have no conflict, approximately $20-30\%$ of flows can be routed with minimal policy conflict of one level, while feasible paths for the remaining flows can also be found, fitting as many flows along two-level difference paths, and so forth.} The key insight here is that most of the remaining flows were able to be routed with a conflict of one security level difference. Investigating the reasons behind this is an interesting direction for future work, where it may be possible to identify whether or not this conflict occurs at \textit{hot} (or commonly used) nodes, and whether that information can be used to relabel nodes (and perhaps repurpose them) or physically reconfigure the network to reduce possible conflicts to a minimum, for any set of flows.
\newline \indent 
We observe that the execution time of computing secure paths is strongly correlated with the number of switches in the network, scaling exponentially. We find that both the LP solver (Gurobi \cite{gurobi}) and the greedy algorithms can quickly compute the paths for a few hundred flows in a \textit{8}-ary tree on average between 0.5s to 1s in a Python-based implementation. When routing thousands of flows for the same network, the execution time can increase to between 7-8s, where the greedy algorithm performs faster (approximately 3.5-4s) as the number of flows increases further. However, large data-centers may be composed of tens of thousands of nodes~\cite{al2008scalable}, and even mid-sized data-centers may contain several hundred or up to one-thousand nodes. In the latter case (for example, represented by a \textit{16}-ary fat-tree), solving the optimization can take nearly 45s. In enterprise-grade networks, the computation speed of MLSNet can be further improved by applying buffering of precomputed paths, implementation on hardware, or clustering of flows which is especially applicable in fat-tree networks.

}
Overall, the results show strong evidence for the ability of our framework to accommodate flows in real networks, while preserving confidentiality of information flow.
\newline \indent

\subsection{Defending Against Attacks}
\label{sec:defenseeffectiveness}

We then implement the network scenario shown in Figure~\ref{fig:evalscenario} with the SDN simulator Mininet~\cite{mininet}, assigning security labels to the nodes and configuring the network to use our framework \textit{MLSNet}. Here, we use SDNMap~\cite{sdnmap} to demonstrate that our framework can mitigate the attack techniques discussed in Section~\ref{sec:securitydiscussion}. SDNMap operates by iteratively probing network nodes with crafted packets and \textit{eavesdropping} on reply messages from all endpoints in the network to reconstruct flow rules (identify active hosts and supported protocols). The gathered information is then used to exploit flow rules and bypass security measures such as access-control lists.

Here, we use the security lattice from Figure~\ref{fig:accesscontrollattice1} to configure the network. Running the MLSNet system at the SDN controller, we assign the security classification of \textit{Public - [ARP,IP,TCP]} to the adversary node at 10.0.0.1 (who is using SDNMap). We then let the adversary begin sending probes into the network. The switch default action for an unknown flow is to send it to the controller for inspection and flow rule generation. On receipt of a new flow, the controller will verify the access control constraint of the communicating parties and the flow control constraint of nodes along a potential flow path. A secure flow rule that obeys the security policy will then be generated.

\textbf{Mitigating lateral movement.}
MLSNet ensures that a node with this classification cannot receive packets from nodes with a higher classification (e.g., from the node at 10.0.0.4 labeled \textit{Secret - [ARP,ICMP,IP,TCP,UDP]}). Therefore, all of the probes destined from the attacker toward a node of higher level should be blocked at the controller, and any induced responses (e.g., from ARP) should also be blocked from flowing back toward the attacker, with no flow rules being generated. We observed exactly this behavior after scanning the network's IP space. The attacker sent out a series of probes enumerating the packet fields (e.g., IP addresses and protocols) to identify active hosts and supported protocols. However, all of the probes sent toward nodes of higher classification were blocked by the controller for not satisfying the access control constraint. Here, SDNMap reported that only the node with IP address 10.0.0.5 replied (the other public host). Although present in the network, the remaining hosts, 10.0.0.4 and 10.0.0.6, are not discovered by the attacker. 

Further, any induced responses over different protocols such as UDP or ICMP were blocked at the controller for not satisfying the flow control constraint as well. Therefore, the attacker could not obtain any other information about the other hosts or protocols supported by them, and their capabilities were only limited to scanning public nodes over the protocols allowed by the defined security label (here, ARP, IP, and TCP). Indeed, future work may investigate optimal labeling and relabeling strategies that can respond to the current network traffic profile in order to dynamically enforce least-privilege and further reduce the threat surface.

\begin{figure}[t]
	\centering
	\includegraphics[width=0.43\textwidth]{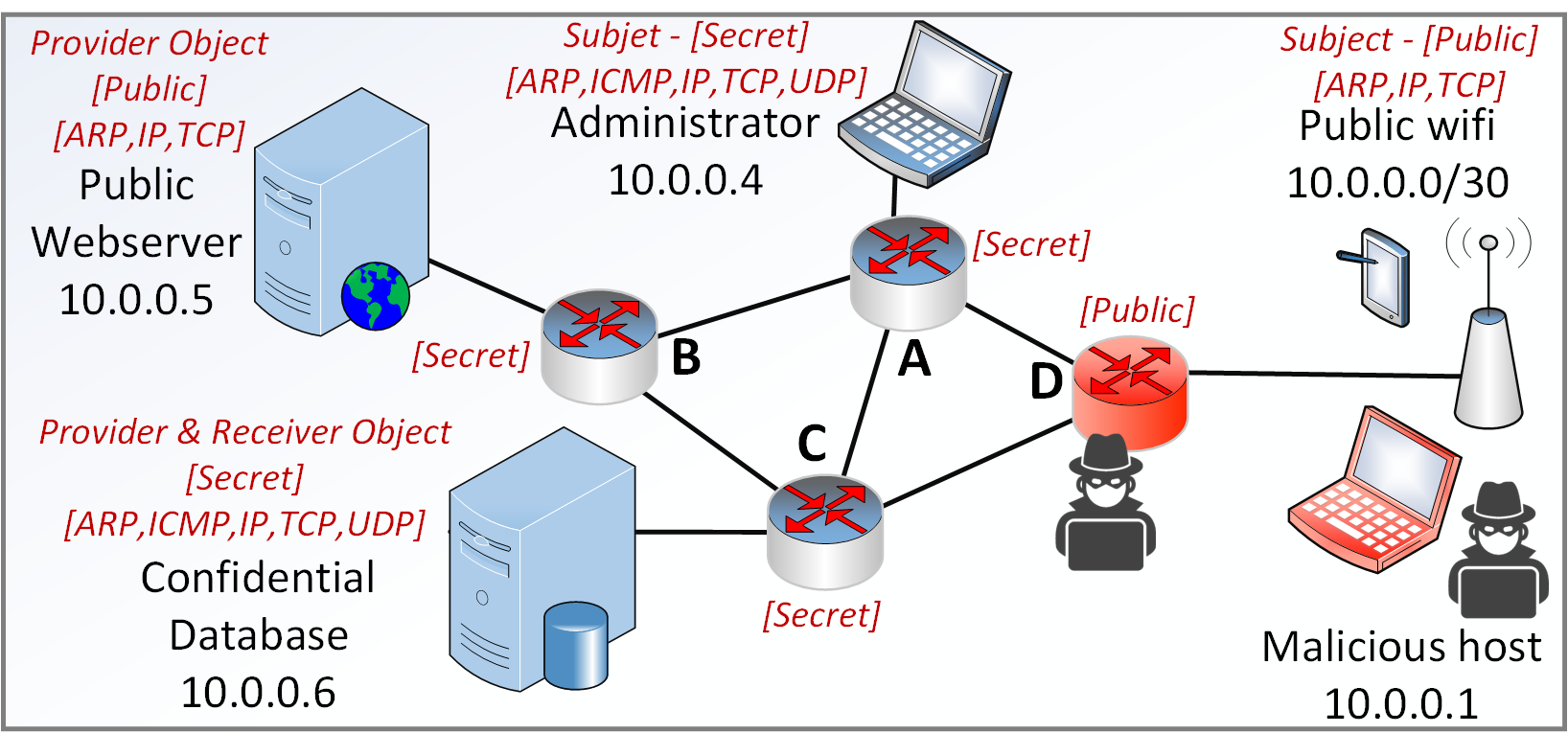}
	\caption{Network scenario with security labels.}
	\label{fig:evalscenario}
\end{figure}

\textbf{Mitigating packet spoofing.}
With respect to packet spoofing, we observe that MLSNet can significantly reduce the threat surface. SDNMap reports that it is not able to spoof IP addresses from hosts which can't be discovered, since our framework will prevent the deployment of a rule for an unrecognized node, and therefore additional flow rule features cannot be reconstructed. By including information from all network layers (as discussed in Section~\ref{sec:secureflowrules}) and deploying rules that only allow flows according to the given policy, we are able to prevent attacks resulting from these forms of spoofing.

Although, we note again that MLS cannot \textit{detect} all forms of packet spoofing, and we rely on other SDN-based defenses to detect packet spoofing of existing nodes or against the topology discovery service~\cite{hong2015poisoning}. Despite this, it can mitigate the threat of data exfiltration from packet spoofing with secure routing paths. For example, consider a compromised node with a lower security level (e.g., public) spoofing a node of higher level (e.g., top-secret) to communicate with another top-secret node, but through two other public nodes. The adversary would have to compromise both public nodes in between in order for them to actually forward the top-secret flow---as it is against security policy.

\textbf{Mitigating eavesdropping.}
In addition to mitigating the threat of lateral movement, MLSNet also prevents the attacker from eavesdropping on communication between nodes of higher-classification. Since the controller prevents flow rules that direct higher-level traffic toward public nodes from being generated, the attacker was unable to capture any traffic passing between the secret and confidential nodes at 10.0.0.4 and 10.0.0.6. Further, any responses induced from the probes were also blocked from generating a new flow rule to direct traffic toward the attacker, preventing eavesdropping. As a result, \textit{MLSNet} is able to significantly limit the ability of SDNMap to reconstruct flow rules.

\section{Related Work}
\label{sec:relatedwork}

Historically, networks have enforced security policies (i.e., information flow) through firewall and routing configuration.  However, these mechanisms are often very coarse and prone to ambiguity, errors, and require coordination across many devices~\cite{reiter-access-control,wool2004quantitative,yuan2006fireman}. Indeed, failures due to errors have enabled a variety of attacks to be launched against real-world networks, including device impersonation, man-in-the-middle, and denial-of-service~\cite{conti2016survey,sommer2010outside,osanaiye2016distributed}. Typically these attacks manifest from a small set of techniques: packet spoofing, lateral movement, and eavesdropping, which have been well-known problems since the 90s~\cite{harris1999tcp} and have become increasingly important as more information is being put online~\cite{shaikh2011security}. In fact, recent work has already demonstrated the ability of an adversary to freely probe within the network to recover sensitive information~\cite{hong2015poisoning,yoon2017flow,dhawan2015sphinx,porras2012security,hu2014flowguard}, including active network hosts and even switch flow table rules in software-defined networks~\cite{achleitner2017adversarial}.

Over time, there have been many defense methods proposed against these techniques, including: source validation to prevent or mitigate packet spoofing~\cite{senie1998network,savu2011cloud}, firewalling to enforce access policies at network boundaries~\cite{yuan2006fireman}, encryption to prevent unauthorized parties from intelligibly interpreting sniffed data, among others. While each useful in a variety of scenarios, they target specific attack techniques and only partially address the problem of confidentiality---ensuring that only authorized entities have access to some data. This motivates our work for developing a solution to combine the benefits offered by each of these methods. We exploit multilevel security to accomplish this, providing guarantees about who in the network may access what data.

Multilevel security allows a network administrator to specify an hierarchical access control policy of a set of subjects on a set of objects. With labels (i.e., a level and categories) given to each subject and object, the policy is enforced through access and flow-control constraints. In fact, multilevel security already plays a critical role in controlling access to information for both military personnel and employees of commercial businesses with different levels of clearance~\cite{saydjari2004multilevel}. Common use cases include controlling file access in an operating system~\cite{loscocco2001security}, object access in generic storage systems~\cite{varadharajan1990multilevel}, table access in a relational database~\cite{qian1997semantic}, as well as a primitive for securing information flow between variables in programming languages~\cite{volpano1997type}.

The notion of multilevel security can also be applied to computer networks, where the MLS policy dictates which nodes are allowed to communicate, what type of traffic they may exchange, and what paths the flows may take in the network. This property precisely address the concerns about confidentiality. We draw inspiration for our framework from the seminal work by \textit{Lu et al.}~\cite{lu1990model} and apply it to SDN-enabled networks. In their work, they introduce a model for multilevel security (MLS) in computer networks by defining a Trusted Network Base (TNB) that is similar to a Trusted Computing Base (TCB) in single-computer systems. The proposed model defines a set of entities (e.g., terminals or printers) and users of the network and relies on the implementation of a security policy by the network endpoints. This approach becomes impractical when having to deploy it on every node in the network, and we exploit the centralization of software-defined networking (SDN)~\cite{shin2012software} to provide this service transparently to the entire network.

\section{Conclusion}
\label{sec:conclusion}
In this paper, we propose \textit{MLSNet}, a framework which can efficiently enforce an MLS policy by generating secure flow-rule configurations. Built upon \textit{access control} and \textit{flow control} constraints, we develop models and heuristic algorithms to compute policy compliant configurations according to two goals: satisfying a strict flow policy and a soft policy. For the deployment of a policy compliant network configuration, we define principles for secure flow rule construction. We then demonstrate that our framework can deploy network configurations able to withstand recently identified attacks on SDNs. We hope this framework will serve as a base for further investigation into defenses which protect the network with a broader scope than specific attacks and efficient mechanisms for resolving policy conflicts in real-time.



%
\newpage
\bibliographystyle{IEEEtran}
\bibliography{main}
\end{document}